\theoremstyle{plain}
\newtheorem{theorem}{Theorem}
\newtheorem{lemma}[theorem]{Lemma}
\newtheorem{problem}[theorem]{Problem}
\newcommand{\emptystr}{\varepsilon}
\newcommand{\isleaf}{\mathsf{isleaf}}
\newcommand{\lchild}{\mathsf{lchild}}
\newcommand{\rchild}{\mathsf{rchild}}
\newcommand{\rmleaf}{\mathsf{rmleaf}}
\newcommand{\leafrank}{\mathsf{leafrank}}
\newcommand{\bwdsearch}{\mathsf{bwdsearch}}
\newcommand{\gslp}{\mathcal{G}}
\newcommand{\gtext}{\mathcal{T}}
\newcommand{\expand}[1]{\langle#1\rangle}
\newcommand{\gdag}{\mathcal{D}}
\newcommand{\dic}{\mathsf{R}}
\newcommand{\pa}{\mathsf{P}}
\newcommand{\da}{\mathsf{D}}
\newcommand{\dicone}{\mathsf{R}_{1}}
\newcommand{\dictwo}{\mathsf{R}_{2}}
\newcommand{\lensum}{\mathsf{G}}
\newcommand{\cbt}{\mathsf{B}}
\newcommand{\dice}{\mathsf{R_{E}}}
\newcommand{\treee}{\mathsf{T_{E}}}
\newcommand{\marke}{\mathsf{M_{E}}}
\newcommand{\monoe}{\mathsf{S}}
\newcommand{\rank}{\mathsf{rank}}
\newcommand{\select}{\mathsf{select}}
\title{Space-efficient SLP Encoding for $O(\log N)$-time Random Access}
\author[1]{Akito Takasaka\thanks{takasaka.akito977@mail.kyutech.jp}}
\author[1]{Tomohiro I\thanks{tomohiro@ai.kyutech.ac.jp}}
\affil[1]{Kyushu Institute of Technology, 680-4 Kawazu, Iizuka, Fukuoka 820-8502, Japan}
\date{}
\begin{document}
\maketitle

\abstract{A Straight-Line Program (SLP) $\gslp$ for a string $\gtext$ is a context-free grammar (CFG) that derives $\gtext$ only,
which can be considered as a compressed representation of $\gtext$.
In this paper, we show how to encode $\gslp$ in $n \lceil \lg N \rceil + (n + n') \lceil \lg (n+\sigma) \rceil + 4n - 2n' + o(n)$ bits 
to support random access queries of extracting $\gtext[p..q]$ in worst-case $O(\log N + q - p)$ time, where
$N$ is the length of $\gtext$, $\sigma$ is the alphabet size, $n$ is the number of variables in $\gslp$ and
$n' \le n$ is the number of symmetric centroid paths in the DAG representation for $\gslp$.
The time complexity is almost optimal because Verbin and Yu [CPM 2013] proved that 
$O(\log N)$ term cannot be significantly improved in general with $\mathrm{poly}(n)$-space data structures.
We also present alternative encodings that achieve the same random access time with
$n \lceil \lg N \rceil + n \lceil \lg (n+\sigma) \rceil + 5n + n' + o(n)$ or $n \lceil \lg N \rceil + n \lceil \lg (n+\sigma) \rceil + 5n - n' + \sigma + o(n+\sigma)$ bits of space.
}

\section{Introduction}\label{sec:intro}
A Straight-Line Program (SLP) $\gslp$ for a string $\gtext$ is a context-free grammar (CFG) that derives $\gtext$ only.
The idea of grammar compression is to take $\gslp$ as a compressed representation of $\gtext$,
which is a useful scheme to capture repetitive substrings in $\gtext$.
In fact, the output of many practical dictionary-based compressors like RePair~\cite{Larsson1999RePair} and LZ77~\cite{1976LempelZ_ComplOfFinitSequen_TIT,1977ZivL_UniverAlgorForSequenData_IeeeTransInfTheory} 
can be considered as or efficiently converted to an SLP.\@
Moreover, SLPs have gained popularity for designing algorithms and data structures to work directly on compressed data.
For more details, see survey~\cite{2015Lohrey_GrammBasedTreeCompr} and references therein.

One of the most fundamental tasks on compressed string is to support random access without explicitly decompressing the whole string.
Let $\gtext$ be a string of length $N$ over an alphabet $\Sigma$ of size $\sigma$ and $\gslp$ be an SLP that derives $\gtext$ with $V$ being the set of variables.
For simplicity, we assume that SLPs are in the normal form such that every production rule is of the form $X \rightarrow YZ \in (V \cup \Sigma)^2$.
If we store the length of the string derived from every variable in $n \lg N$ bits in addition to the information of production rules,
it is not difficult to see that we can access $\gtext[p]$ for any position $1 \le p \le N$ in $O(h)$ time,
where $h$ is the height of the derivation tree of $\gslp$:
We can simulate the traversal from the root to the $p$-th leaf of the derivation tree of $\gslp$
while deciding if the current node contains the target leaf in its left child or not.
This simple random access algorithm is good enough if $\gslp$ is well balanced, i.e., $h = O(\log N)$,
but $h$ could be as large as $n$ in the worst case.

To solve this problem, Bille et al.~\cite{2015BilleLRSSW_RandomAccesToGrammCompr} showed that there is a data structure of $O(n \log N)$ bits of space 
that can retrieve any substring of length $\ell$ of $\gtext$ in $O(\log N + \ell)$ time.
Belazzougui et al.~\cite{2015BelazzouguiCPT_AccesRankAndSelecIn_ESA} showed that
the query time can be improved to $O(\log N + \ell/\log_{\sigma} N)$ 
by adding some other data structures of $O(n \log N)$ bits to accelerate accessing $O(\log_{\sigma} N)$ consecutive characters.
Verbin and Yu~\cite{2013VerbinY_DataStrucLowerBoundOn_CPM} studied lower bounds of random access data structures on grammar compressed strings and
showed that any data structure of $S = \mathrm{poly}(n)$ space needs $\Omega(\log^{1-\epsilon} N / \log S)$ time for any constant $\epsilon$,
and thus, the data structures of~\cite{2015BilleLRSSW_RandomAccesToGrammCompr,2015BelazzouguiCPT_AccesRankAndSelecIn_ESA} achieve almost optimal random access time in general.

Another approach is to transform a given grammar into a balanced grammar of height $O(\log N)$, and apply the above-mentioned simple random access algorithm.
Ganardi et al.~\cite{2021GanardiJL_BalanStraigLineProgr} showed that 
any SLP can be transformed in linear time into a balanced grammar without increasing its order in size.
The result was refined in~\cite{2021Ganardi_ComprByContrStraigLine_ESA} for contracting SLPs, which have a stronger balancing condition.
These results are helpful not only for random access but also for other operations that can be done depending on the height of the derivation tree.
However, the constant-factor blow-up in grammar size can be a problem in space-sensitive applications.

In order to keep space usage small in practice, it is important to devise a space-economic way to encode $\gslp$ and auxiliary data structures,
as such importance has been highlighted by increasing interest of succinct data structures.
Since the righthand side of each production rule can be stored in $2 \lceil \lg (n+\sigma) \rceil$ bits,
a naive encoding for $\gslp$ would use $2n \lceil \lg (n+\sigma) \rceil + n \lceil \lg N \rceil$ bits of space,
which is far from an information-theoretic lower bound $n \lg (n+\sigma) + 2n + o(n)$ bits for representing $\gslp$~\cite{2013TabeiTS_SuccinGrammCompr_CPM}.
Maruyama et al. proposed a succinct encoding of $n \lceil \lg (n + \sigma) \rceil + 2n + o(n)$ bits~\cite{2013MaruyamaTSS_FullyOnlinGrammCompr_SPIRE} for $\gslp$,
which can be augmented with additional $n \lceil \lg \frac{N}{n} \rceil + 2n + o(n)$ bits to support $O(h + \ell)$-time random access.
Other practical encodings for random access were studied in~\cite{2020GagieIMNSBT_PractRandomAccesToSlp_SPIRE,2024ClearyWDI_RevisFolklAlgorForRandom_SPIRE},
but its worst-case query time is $O(h + \ell)$.

In this study, we propose a novel space-efficient SLP encoding that supports random access queries in worst-case $O(\log N + \ell)$ time.
In so doing, we simplify some ideas of~\cite{2015BilleLRSSW_RandomAccesToGrammCompr} and adjust them to work with succinct data structures.
For example, we replace the heavy-path decomposition of a Directed Acyclic Graph (DAG) with 
the symmetric centroid decomposition proposed in~\cite{2021GanardiJL_BalanStraigLineProgr}.
We decompose the DAG representation of the derivation tree of $\gslp$ into disjoint Symmetric Centroid paths (SC-paths) so that
every path from the root to a leaf passes through $O(\log N)$ distinct SC-paths.
We augment each SC-path with compacted binary tries to support interval-biased search,
which leads to $O(\log N)$-time random access.
Under a standard Word-RAM model with word size $\Omega(\lg N)$ we get the following result:\footnote{The result is enhanced with encoding (II) and (III) from the preliminary version in~\cite{2024TakasakaI_SpaceEfficSlpEncodFor_SPIRE}.}
\begin{theorem}\label{theorem:enc}
  Let $\gtext$ be a string of length $N$ over an alphabet of size $\sigma$.
  An SLP $\gslp$ for $\gtext$ can be encoded in (I) $n \lceil \lg N \rceil + (n + n') \lceil \lg (n+\sigma) \rceil + 4n - 2n' + o(n)$,
  (II) $n \lceil \lg N \rceil + n \lceil \lg (n+\sigma) \rceil + 5n + n' + o(n)$, or
  (III) $n \lceil \lg N \rceil + n \lceil \lg (n+\sigma) \rceil + 5n - n' + \sigma + o(n+\sigma)$ bits of space
  while allowing to retrieve, given $1 \le p \le q \le N$, 
  the substring $\gtext[p..q]$ in $O(\log N + q - p)$ time,
  where $n$ is the number of variables of $\gslp$ and 
  $n' \le n$ is the number of SC-paths in the DAG representation for $\gslp$.
\end{theorem}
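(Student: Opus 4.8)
\emph{Proof idea.}
The plan is to build the derivation tree's DAG and decompose it, following~\cite{2021GanardiJL_BalanStraigLineProgr}, into $n'$ disjoint symmetric centroid paths, using only that (i) the SC-paths partition the $n$ variables and (ii) every root-to-leaf path of the derivation tree meets $O(\log N)$ of them. Fix an SC-path $\pa = v_1 \to v_2 \to \dots \to v_k$, where each $v_{i+1}$ is the heavy child of $v_i$ and $v_k$ has no heavy child. Unfolding the rules along $\pa$ writes $\expand{v_1}$ as a concatenation $\expand{v_1} = L_a \cdots L_1 \cdot \expand{v_k} \cdot R_1 \cdots R_b$ with $a+b=k-1$, where each $L_i$ (resp.\ $R_j$) is the expansion of a light child hanging off $\pa$ on the left (resp.\ right), listed in the left-to-right order they occur in $\expand{v_1}$. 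For $\pa$ I would store: the $k-1$ light-child symbols in that left-to-right order, together with the (ordered) two children of $v_k$; summed over all SC-paths these are exactly $n+n'$ symbols, i.e.\ $(n+n')\lceil\lg(n+\sigma)\rceil$ bits. Globally I store the length $|\expand{X}|$ of every variable ($n\lceil\lg N\rceil$ bits). Crucially I would \emph{not} store the per-edge ``left/right heavy child'' bits: extraction only ever needs the linear list of pieces and, for a located light piece, the name of its variable, both of which are available from the data above; the zigzag pattern is discarded. On top of the piece lengths $|L_a|,\dots,|L_1|,|\expand{v_k}|,|R_1|,\dots,|R_b|$ of $\pa$ I build a compacted binary trie --- an interval-biased search structure --- whose topology is stored as a balanced-parentheses string (at most $4k-2$ bits per path, hence at most $4n-2n'$ bits overall) and augmented with the standard $o(\cdot)$ rank/select machinery for constant-time navigation; the trie's shape is chosen so that, given an offset $p\in[1..|\expand{v_1}|]$, it returns the piece containing position $p$ and the offset within it in $O(1+\lg(|\expand{v_1}|/w))$ time, where $w$ is that piece's length.

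Random access to $\gtext[p]$ then descends the derivation tree SC-path by SC-path: entering $\pa$ at $v_1$ with local offset $p'$, run the biased search; if the located piece is some $L_i$ or $R_j=\expand{Z}$, recurse into the SC-path topped by $Z$ with the updated offset; if it is $\expand{v_k}$, use the two stored children of $v_k$ and one length comparison to step to the correct child, which tops the next SC-path. Since moving from the top of $\pa$ to the top of the next visited SC-path at least halves the expansion length, the search cost $O(1+\lg(S/w))$ on $\pa$ is $O(1+\lg(S/S'))$ for consecutive top-lengths $S\ge S'$; over the $O(\log N)$ visited SC-paths the constant terms contribute $O(\log N)$ and the $\lg(S/S')$ terms telescope to $O(\lg N)$, giving $O(\log N)$-time random access. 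For a full substring $\gtext[p..q]$, I would locate position $p$ as above and then run the standard traversal of the derivation tree as in~\cite{2015BilleLRSSW_RandomAccesToGrammCompr}, which emits the remaining $q-p$ characters in an additional $O(\log N + q - p)$ time and needs no further stored data; combined with the space tallied above this yields Theorem~\ref{theorem:enc}.

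The hard part is the simultaneous fulfilment of three requirements by the compacted-trie encoding: (a) its total size across the $n'$ SC-paths stays within the exact $4n-2n'$ budget; (b) a query costs $O(1+\lg(S/w))$ rather than merely $O(\lg k)$, which forces a ``biased'', Huffman-like shape over the piece lengths (codeword length $O(1+\lg(S/w))$ for a piece of length $w$); and (c) from the trie one can, in $O(1)$ per step, recover the located piece's index, decide whether that piece is $\expand{v_k}$ or a light child, and in the latter case index into the stored light-child symbol list --- all \emph{without} spending an extra $\Theta(\log k)$ bits per path to record where $\expand{v_k}$ sits among the pieces. Reconciling (a) with (b) and (c) is the crux; I expect the resolution to encode the ``$v_k$-piece'' position implicitly inside the trie topology (e.g.\ as a distinguished leaf whose location the parenthesis structure already pins down) so that no separate pointer is needed. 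A secondary, more routine point is to verify the telescoping precisely at the bottom nodes $v_k$, where the reported piece $\expand{v_k}$ may be nearly as long as $S$ so the biased search itself costs only $O(1)$ and the factor-$2$ drop is provided by the subsequent step into a child of $v_k$, and to check that the rank/select overheads for the $n'$ tries and the global length/symbol arrays sum to $o(n)$.
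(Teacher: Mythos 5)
Your high-level plan matches the paper's: decompose the DAG into SC-paths, store one compacted binary trie per path over the prefix sums of the piece lengths so that a biased search within a path costs $O(1+\lg(S/w))$, and telescope over the $O(\log N)$ visited paths. The space totals also come out to the stated bound. However, there are two concrete gaps in the details, and they are precisely at the point you yourself flag as ``the crux.''

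First, you assert that following a non-SC-edge lands you at the \emph{top} of the next SC-path (``recurse into the SC-path topped by $Z$'' and ``which tops the next SC-path''). This is false for DAGs with sharing: a node $Z$ can simultaneously have one incoming SC-edge from some $u'$ and several incoming non-SC-edges from other parents, so the non-SC-edge you follow may deliver you to the middle of $Z$'s SC-path. The paper handles this by converting the local offset $p_i$ relative to the entry node $u_j$ into an offset $p'_i$ relative to the top $u_1$ of that path and then running the biased search from $u_1$; this costs only $O(1)$ extra because $\lfloor\lg|\expand{u_1}|\rfloor=\lfloor\lg|\expand{u_m}|\rfloor$ by the SC-decomposition, so $\log|\expand{u_1}|-\log|\expand{u_j}|<1$. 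Crucially, that offset conversion requires knowing how many light edges above $u_j$ branch off to the left, and hence requires per-edge direction information. The same direction information is also what tells you which of the pieces $v_1,\dots,v_{m+1}$ is the one corresponding to $\expand{v_k}$ (your point (c)); this is simply not derivable from the trie's shape, which depends only on the numerical prefix sums, not on which side each light child hangs.

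Second, and related, your budget leaves no room for this needed data. You encode each trie of $2m-1$ nodes with balanced parentheses in $4m-2$ bits, spending the full $4n-2n'$; the paper instead uses a post-order encoding of full binary trees at $2m-1$ bits each (total $2n-n'$), and spends the remaining $2n-n'$ bits on exactly the two bit vectors you discard: $\pa[1..n]$ ($n$ bits) to delimit SC-path boundaries in the concatenated global arrays, and $\da[1..n-n']$ ($n-n'$ bits) to record the left/right direction of each light branch. So your ``hoped-for'' resolution (recover $v_k$'s position from the trie topology so no pointer is needed) does not exist; the actual resolution is to halve the trie encoding and store the direction bits explicitly. Finally, you state you store the raw lengths $|\expand{X}|$ in the $n\lceil\lg N\rceil$-bit array, but the compacted binary trie of Lemma~\ref{lem:ibs} needs the \emph{prefix sums} $g_1,\dots,g_m$ stored explicitly to drive the binary search; the paper's $\lensum$ holds those prefix sums (and, together with $\da$, can recover $|\expand{u_i}|$ in $O(1)$ as a side benefit, which raw lengths plus a trie without direction bits cannot).
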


Our effort in this paper focuses on squeezing the space needed to encode SLP itself while achieving $O(\log N)$-time random access.
In particular, the space for encoding (II) and (III) of Theorem~\ref{theorem:enc} matches the information-theoretic lower bound to store SLP itself if we ignore $n \lceil \lg N \rceil$ bits used to store expansion lengths of variables and other insignificant lower terms.
Now, the most space consuming term in our encodings is $n \lceil \lg N \rceil$ bits used to store expansion lengths.
Although we are not aware of precise lower bounds of bits needed for $O(\log N)$-time random access, it would be unlikely that we can eliminate the $n \lceil \lg N \rceil$ term completely.
Still, if $N = O(\mathrm{poly}(n))$, then our effort is meaningful because it actually leads to lowering the coefficient of the leading term of $\Theta(n \log (n + \sigma))$ bits.

\section{Preliminaries}\label{sec:prelim}

\subsection{Basic Notation}
For two integers $i$ and $j$ with $i \le j$, let $[i..j]$ represent
the integer interval from $i$ to $j$, i.e. $[i..j] := \{ i, i+1, \dots, j-1, j \}$.
If $i > j$, then $[i..j]$ denotes the empty interval.
Also, let $[i..j) := [i..j-1]$ and $(i..j] := [i+1..j]$.
We use $\lg$ to denote the binary logarithm, i.e., the logarithm to the base 2.

Let $\Sigma$ be a finite \emph{alphabet}.
An element of $\Sigma^*$ is called a \emph{string} over $\Sigma$.
The length of a string $w$ is denoted by $|w|$. 
The empty string $\emptystr$ is the string of length 0,
that is, $|\emptystr| = 0$.
Let $\Sigma^+ = \Sigma^* - \{\emptystr\}$.
The concatenation of two strings $x$ and $y$ is denoted by $x \cdot y$ or simply $xy$.
When a string $w$ is represented by the concatenation of strings $x$, $y$ and $z$ (i.e. $w = xyz$), 
then $x$, $y$ and $z$ are called a \emph{prefix}, \emph{substring}, and \emph{suffix} of $w$, respectively.
A substring $x$ of $w$ is called \emph{proper} if $|x| < |w|$.

The $i$-th character of a string $w$ is denoted by $w[i]$ for $1 \leq i \leq |w|$,
and the substring of a string $w$ that begins at position $i$ and
ends at position $j$ is denoted by $w[i..j]$ for $1 \leq i \leq j \leq |w|$,
i.e., $w[i..j] = w[i]w[i+1] \cdots w[j]$.
For convenience, let $w[i..j] = \emptystr$ if $j < i$.

\subsection{Straight-Line Programs (SLPs)}
Let $\gtext$ be a string of length $N$ over $\Sigma$.
A \emph{Straight-Line Program (SLP)} $\gslp$ for $\gtext$ is a context-free grammar that derives $\gtext$ only.
Let $V$ be the \emph{variables} (non-terminals) of $\gslp$.
We use a term \emph{symbol} to refer to an element in $(V \cup \Sigma)$.
We obtain $\gtext$ by recursively replacing the starting variable of $\gslp$
according to the production rules of $\gslp$ until every variable is turned into a sequence of \emph{characters} (terminals).
To derive $\gtext$ uniquely, the derivation process of $\gslp$ should be deterministic and end without loop.
In particular, for each variable $X$, there is exactly one production rule that has $X$ in its lefthand side,
which we call the production rule of $X$.
We denote by $\dic(X)$ the righthand side of the production rule of $X$.
For simplicity, we assume that SLPs are in the normal form such that $\dic(X) \in (V \cup \Sigma)^2$.
For any symbol $x$, let $\expand{x}$ denote the string derived from $x$, i.e.,
$\expand{x} = x$ if $x \in \Sigma$, and otherwise $\expand{x} = \expand{\dic(x)[1]}\expand{\dic(x)[2]}$.
We extend this notation so that $\expand{w} := \expand{w[1]}\expand{w[2]}\cdots\expand{w[|w|]}$ for a string $w$ over $(V \cup \Sigma)^*$.
For a symbol $x$, the \emph{derivation tree} of $x$ is the rooted tree that represents the derivation process from $x$ to $\expand{x}$.
The derivation tree of the starting symbol is called the derivation tree of $\gslp$.
Note that the derivation tree of $\gslp$ can be represented by a Directed Acyclic Graph (DAG) with $n + \sigma$ nodes and $2n$ edges, 
which we call the DAG of $\gslp$ and denote by $\gdag_{\gslp}$.
Since there is a natural one-to-one correspondence from symbols to nodes of $\gdag_{\gslp}$, 
we will sometimes use them interchangeably.

We set some assumptions under which we study the space needed to store $\gslp$.
We assume that symbols are identified by integers 
with $V = [1..n]$ and $\Sigma = [n+1..n+\sigma]$
so that each of them is represented in $\lceil \lg (n + \sigma) \rceil$ bits, 
where $n := |V|$ and $\sigma := |\Sigma|$.
For a terminal symbol associated with an integer $i \in [n+1..n+\sigma]$,
its original code on computer is assumed to be obtained easily,
e.g., by computing $i - n$ or storing a mapping table whose space usage is excluded from our space complexity.

\subsection{Succinct Data Structures}
For a bit string $B \in \{ 0, 1 \}^{*}$, we consider the following queries:
\begin{itemize}
  \item For any $b \in \{ 0, 1\}$ and $i \in [1..|B|]$, $\rank_{b}(B, i)$ returns the number of occurrences of $b$ in $B[1..i]$.
    For convenience, we let $\rank_{b}(B, i)$ return $0$ if $i < 1$, and $|B|$ if $i > |B|$.
  \item For any $b \in \{ 0, 1\}$ and $j \in [1..\rank_{b}(B, |B|)]$, $\select_{b}(B, j)$ returns the position $i$ such that $\rank_{b}(B, i) = j$ and $B[i] = b$.
    For convenience, we let $\select_{b}(B, j)$ return $0$ if $j \le 0$, and $|B|+1$ if $j > \rank_{b}(B, |B|)$.
\end{itemize}

We use the following succinct data structure on bit strings:
\begin{lemma}[\cite{2007RamanRS_SuccinIndexDictionWithApplic}]\label{lem:rrr}
  For a bit string $B \in \{ 0, 1 \}^{n}$, there is a data structure of $n + o(n)$ bits
  that supports $\rank$ and $\select$ queries in $O(1)$ time.
\end{lemma}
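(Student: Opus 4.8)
The statement only claims $n+o(n)$ bits, which is weaker than the compressed $nH_0(B)+o(n)$-bit bound that is the real contribution of~\cite{2007RamanRS_SuccinIndexDictionWithApplic}; so the plan is to prove it along the classical route — keep $B$ verbatim in $n$ bits and attach the textbook constant-time $\rank$/$\select$ directories, each of which I will argue occupies only $o(n)$ bits.

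\textbf{Rank.} I would build a three-level directory. Partition $B$ into \emph{superblocks} of $w:=\lceil\lg^2 n\rceil$ bits and store, for each superblock, the value of $\rank_1(B,\cdot)$ at its left boundary: there are $O(n/\lg^2 n)$ superblocks and each value needs $\lceil\lg(n+1)\rceil$ bits, for $O(n/\lg n)=o(n)$ bits total. Partition each superblock into \emph{blocks} of $b:=\lceil\tfrac12\lg n\rceil$ bits and store, for each block, the number of $1$s from its superblock's boundary up to the block's left boundary; this count is at most $w$ and there are $O(n/\lg n)$ blocks, for $O(n\lg\lg n/\lg n)=o(n)$ bits. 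Finally precompute one universal table that, given a block content in $\{0,1\}^{b}$ and an offset in $[1..b]$, returns the number of $1$s in the corresponding prefix of the block; it has $2^{b}\,b=O(\sqrt{n}\,\lg n)$ entries of $O(\lg\lg n)$ bits, i.e.\ $o(n)$ bits. Then $\rank_1(B,i)$ equals the superblock entry plus the block entry plus one table lookup on the $O(\lg n)$-bit fragment of $B$ ending at $i$, so it runs in $O(1)$ time; $\rank_0(B,i):=i-\rank_1(B,i)$ handles $b=0$, and the out-of-range conventions are immediate.

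\textbf{Select.} For $\select_1$ I would use the classical multi-level index of Clark and Munro. Group the $1$s of $B$ into consecutive \emph{chunks} of $\lceil\lg^2 n\rceil$ ones and store the $B$-position of the first $1$ of every chunk; this uses $O(n/\lg n)=o(n)$ bits and splits $B$ into the corresponding ranges. Call a chunk \emph{sparse} if its range spans $\ge\lg^4 n$ bits of $B$, and \emph{dense} otherwise. For a sparse chunk store its $\lceil\lg^2 n\rceil$ answers as absolute positions: sparse chunks are disjoint with span $\ge\lg^4 n$, so there are $O(n/\lg^4 n)$ of them, costing $O(n/\lg n)=o(n)$ bits, and a query there is a single array access. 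For a dense chunk the relevant window of $B$ has size $<\lg^4 n$, so I recurse at that reduced scale: re-partition its ones into sub-chunks, store sub-chunk boundaries relative to the chunk in $O(\lg\lg n)$ bits each, and again split sub-chunks into sparse and dense using a threshold that is a fixed power of $\lg\lg n$. A constant number of such rescalings brings the live window below $\lg n$ bits, where it fits in one machine word and a universal $\rank$-style table finishes the query in $O(1)$ time. A routine accounting shows that each level of the recursion contributes $o(n)$ bits and that the recursion depth is constant. The $\select_0$ index is built identically from the positions of the $0$s, without storing the complement of $B$.

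\textbf{Putting it together, and the hard part.} Summing up, the encoding is $B$ itself ($n$ bits) together with $o(n)$ bits of directories, and every $\rank_b$ or $\select_b$ query is reduced to a constant number of word-RAM operations and table lookups, hence $O(1)$ time — exactly the claimed bound. The step I expect to demand the most care is the $\select$ construction: one must choose the sparse/dense thresholds at every rescaling so that, at the same time, the explicit answer arrays of the sparse (sub)chunks and the relative-boundary arrays of the dense (sub)chunks all sum to $o(n)$, \emph{and} the number of rescalings — hence the number of lookups per query — stays bounded by a constant. The $\rank$ directory and the terminal table lookups, by contrast, are entirely routine.
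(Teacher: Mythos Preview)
The paper does not prove this lemma at all: it is stated with a citation to Raman, Raman and Rao and is used as a black box, so there is no ``paper's own proof'' to compare your proposal against. Your sketch is the classical Jacobson/Clark--Munro construction and is correct for the stated $n+o(n)$-bit bound; it simply goes well beyond what the paper itself supplies.
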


We also consider succinct data structures for rooted ordered full binary trees in which all internal nodes have exactly two children of left and right.
Notice that a full binary tree with $m$ nodes can be represented in $m$ bits instead of $2m$ bits needed in the case of arbitrary rooted ordered trees,
and there is a data structure to support various operations in $m + o(m)$ bits.
Although there could be several ways to encode the topology of the tree,
we employ the post-order encoding $B$ in which every node is identified by its post-order rank.
In this paper, we use the queries listed in the following lemma,
which is not new as it has been used in the literature (e.g.,~\cite{2013MaruyamaTSS_FullyOnlinGrammCompr_SPIRE,2017TakabatakeIS_SpaceOptimGrammCompr_ESA}).
We give its proof for the sake of completeness.

\begin{lemma}\label{lem:poe}
  For a full binary tree with $m$ nodes, there is a data structure of $m + o(m)$ bits to support the following queries in $O(1)$ time for a node $v$,
  where every node involved in the queries is identified by its post-order rank.
  \begin{itemize}
    \item $\isleaf(v)$ returns if $v$ is a leaf node.
    \item $\lchild(v)$ returns the left child of $v$.
    \item $\rchild(v)$ returns the right child of $v$.
    \item $\rmleaf(v)$ returns the rightmost leaf in the subtree rooted at $v$.
    \item $\leafrank(v)$ returns the number of leaves up to and including a leaf $v$.
  \end{itemize}
\end{lemma}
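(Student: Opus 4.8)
The plan is to represent the topology of the full binary tree by the length-$m$ bit string $B$ obtained from a post-order traversal, setting $B[i]=1$ if the $i$-th visited node is a leaf and $B[i]=0$ if it is internal, and to store $B$ together with the structure of Lemma~\ref{lem:rrr} so that $\rank$ and $\select$ on $B$ take $O(1)$ time within $m+o(m)$ bits. First I would check that $B$ determines the tree and characterise the admissible strings: scan $B$ left to right maintaining a stack in which a $1$ pushes a fresh leaf and a $0$ pops the two topmost entries and pushes a new internal node whose right child is the first popped entry and whose left child is the second; this process succeeds precisely for the strings satisfying $e(0)=0$, $e(m)=1$, and $e(i)\ge 1$ for all $i\in[1..m]$, where $e(i) := \rank_{1}(B,i)-\rank_{0}(B,i) = 2\rank_{1}(B,i)-i$ is the \emph{excess}, and it then outputs exactly the full binary tree whose post-order traversal yields $B$. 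Note that $e(i)$ is the height of the stack after reading $B[1..i]$ and is computable in $O(1)$ time from $\rank_{1}$.

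Next I would dispatch the four direct queries. $\isleaf(v)$ is whether $B[v]=1$, read in $O(1)$ time (e.g. as $\rank_{1}(B,v)-\rank_{1}(B,v-1)$), and $\leafrank(v)=\rank_{1}(B,v)$. For $\rchild(v)$ with $v$ internal, observe that the subtree rooted at $v$ occupies a contiguous block of post-order ranks ending at $v$, inside which the left subtree is visited first, then the right subtree, then $v$ itself; hence the root of the right subtree has post-order rank $v-1$, i.e. $\rchild(v)=v-1$. For $\rmleaf(v)$, since repeatedly taking the right child decreases the post-order rank by exactly one, the rightmost leaf of the subtree of $v$ is the first leaf met while scanning $v,v-1,v-2,\dots$ downward, so $\rmleaf(v)=\select_{1}(B,\rank_{1}(B,v))$, again $O(1)$ (symmetrically $\lmleaf(v)$ is the first node of that block, obtainable once $\lchild$ is available).

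The only nontrivial query is $\lchild(v)$ for internal $v$, and the key identity I would prove is
\[ \lchild(v) \;=\; \max\{\, i : 0 \le i < v,\ e(i) = e(v) \,\}. \]
The intuition is a stack argument: once the traversal has finished the subtree of the left child $u$, the node $u$ sits at stack height $e(u)$; the entire right subtree of $v$ is then built on top of $u$, so every post-order rank strictly between $u$ and $v$ has excess at least $e(u)+1$; finally reading $v$ pops the right child and $u$ and pushes $v$, leaving the stack at height $e(u)$, so $e(v)=e(u)$. To make this rigorous I would use that the subtree of any node, read in isolation, is a shift of an admissible excess sequence, hence stays strictly above its starting level at every position after its first and returns to that level only at its last; applying this to the right subtree of $v$ shows that $u=\lchild(v)$ is indeed the \emph{largest} index below $v$ with excess $e(v)$, and the same book-keeping covers the degenerate case in which $\rchild(v)$ is a leaf (then $\lchild(v)=v-2$). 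The right-hand side is a single backward search on the excess array for level $e(v)$, i.e. a $\bwdsearch$ query, which is supported in $O(1)$ time with $o(m)$ additional bits by classical balanced-parentheses data structures (padding $B$ with one trailing $0$ turns $e$ into the depth sequence of a genuine balanced-parenthesis string of length $m+1$, on which this query becomes a $\mathsf{findopen}$/$\mathsf{enclose}$ query). Adding this $o(m)$-bit structure to $B$ and the $\rank$/$\select$ index gives $m+o(m)$ bits in total and $O(1)$ time for all five queries.

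I expect the main obstacle to lie entirely in the $\lchild$ case, in two parts: turning the informal stack picture into an airtight proof of $\lchild(v)=\max\{i<v : e(i)=e(v)\}$ — in particular excluding spurious earlier indices of the same excess and handling the small boundary cases — and importing an $O(1)$-time, $o(m)$-bit implementation of $\bwdsearch$ for this slightly nonstandard (odd-length, never-zero) excess sequence, for which the cleanest route is the reduction to ordinary balanced parentheses indicated above. The remaining queries follow immediately from $\rank$ and $\select$ on $B$.
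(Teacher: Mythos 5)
Your proposal is correct and is essentially the same proof as the paper's, modulo an inverted bit convention: both encode the tree by a post-order bit string marking leaves vs.\ internal nodes, answer $\isleaf$, $\rchild(v)=v-1$, $\rmleaf$, and $\leafrank$ directly from $\rank$/$\select$ on that string, and compute $\lchild(v)$ as a backward search on the excess sequence $E$ (equivalently, $\bwdsearch(v-1,-1)$, since $E[v]=E[v-1]-1$ for internal $v$), realized in $O(1)$ time and $o(m)$ extra bits via a balanced-parentheses index. Your stack-based justification of the $\lchild$ identity is just a more explicit version of the paper's observation that the right child's subtree is itself a full binary tree whose leaves exceed its internal nodes by one.
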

\begin{proof}
  We store a bit string $B[1..m]$ of length $m$ such that the $i$-th bit is $0$ if and only if 
  the node with post-order rank $i$ is a leaf node.
  We store and augment the post-order encoding $B$ with rank/select data structures of Lemma~\ref{lem:rrr} using extra $o(m)$ bits.
  We also conceptually define the so-called excess array $E[1..m]$ such that $E[i] := \rank_0(B, i) - \rank_1(B, i)$ for any $1 \le i \le m$.
  Since it holds that $E[i] > 0$ and $|E[i] - E[i-1]| = 1$, we can use the succinct data structure
  for a balanced parentheses sequence~\cite{2014NavarroS_FullyFunctStaticAndDynam}.
  In particular, we augment $B$ to compute in constant time $\bwdsearch(i, d)$ 
  that returns the maximum position $j$ with $j \le i$ and $E[j] = E[i] + d$.
  We will use $\bwdsearch(i, d)$ queries only for $d = -1$.

  Now, we show how to respond to the queries listed in our lemma.
  For $\isleaf(v)$, we return true if $B[v] = 0$ and false otherwise.
  Since the right child of $v$ immediately precedes $v$ in post order, 
  we just return $v - 1$ for $\rchild(v)$.
  The rightmost leaf in the subtree rooted at $v$ is at the maximum position $j$ 
  such that $j \le v$ and $B[j] = 0$, which can be computed by $\select_0(B, \rank_0(B, v))$.
  For $\lchild(v)$, we look for the rightmost node to the left from $v$ in $B$ that is not in the subtree rooted at the right child of $v$.
  This can be computed by $\bwdsearch(v-1, -1)$ as the subtree rooted at the right child of $v$ is also a full binary tree 
  in which the leaves exceed the internal nodes by exactly one in number.
  For a leaf $v$, we can compute $\leafrank(v)$ by $\rank_{0}(B, v)$.
\end{proof}

\subsection{Symmetric Centroid Decomposition}\label{sec:scd}
Here we briefly review the symmetric centroid decomposition proposed in~\cite{2021GanardiJL_BalanStraigLineProgr}.
Let us work on the DAG of $\gslp$ for a string of length $N$.
For every node $v$, we consider the pair $(\lfloor \lg \dot{v} \rfloor, \lfloor \lg \ddot{v} \rfloor)$ of values, where 
$\dot{v}$ and $\ddot{v}$ are the numbers of paths from the root to $v$ and from $v$ to the leaves, respectively.
Note that $\dot{v}$ and $\ddot{v}$ are both upper bounded by $N$.
An edge from $u$ to $v$ is called an \emph{SC-edge} if and only if 
$(\lfloor \lg \dot{u} \rfloor, \lfloor \lg \ddot{u} \rfloor) = (\lfloor \lg \dot{v} \rfloor, \lfloor \lg \ddot{v} \rfloor)$.
By Lemma 2.1 of~\cite{2021GanardiJL_BalanStraigLineProgr}, every node has at most one outgoing SC-edge and at most one incoming SC-edge.
Hence, a maximal subgraph connected by SC-edges forms a path, which we call an \emph{SC-path}.
Note that SC-paths include an empty path, which consists of a single node.
Lemma 2.1 of~\cite{2021GanardiJL_BalanStraigLineProgr} also states that every path from the root to a leaf 
contains at most $2 \lg N$ non-SC-edges.

\section{New Encoding of SLPs}
In this section we prove Theorem~\ref{theorem:enc}.
In what follows, $\gtext, \sigma, N, \gslp, n$ and $n'$ are used as defined in the theorem.

\subsection{Strategy to Achieve $O(\log N)$-time Random Access}\label{sec:overview}
In this subsection, we give a high-level strategy to achieve $O(\log N)$-time random access.

Given a position $p$ on $\gtext$, we simulate on $\gdag_{\gslp}$ the traversal from the root to the $p$-th leaf of the derivation tree of $\gslp$.
As described in Subsection~\ref{sec:scd}, the traversal contains at most $2 \lg N$ non-SC-edges, 
say $(x_1^{\mathsf{out}}, x_2^{\mathsf{in}})$, $(x_2^{\mathsf{out}}, x_3^{\mathsf{in}})$, $\dots$, $(x_{e}^{\mathsf{out}}, x_{e+1}^{\mathsf{in}})$ with $e \le 2 \lg N$,
where $x_{i}^{\mathsf{in}}$ and $x_{i}^{\mathsf{out}}$ are variables on the same SC-path for all $1 \le i \le e$
(for convenience, let $x_1^{\mathsf{in}}$ be the root node of $\gdag_{\gslp}$, i.e., the starting variable)
and $x_{e+1}^{\mathsf{in}} = \gtext[p]$.
See Fig.~\ref{fig:hls} for an illustration.
If we can move from $x_{i}^{\mathsf{in}}$ to $x_{i+1}^{\mathsf{in}}$ in a way of length-weighted biased search
in $O(1 + \log |\expand{x_{i}^{\mathsf{in}}}| - \log |\expand{x_{i+1}^{\mathsf{in}}}|)$ time,
$O(\log N)$-time random access can be achieved because
$O(\sum_{i = 1}^{e} (1 + \log |\expand{x_{i}^{\mathsf{in}}}| - \log |\expand{x_{i+1}^{\mathsf{in}}}|)) = O(\log N + \log |\expand{x_1^{\mathsf{in}}}| - \log |\expand{x_{e+1}^{\mathsf{in}}}|) = O(\log N)$.

\begin{figure}[t]
 \centering{
   \includegraphics[scale=0.4]{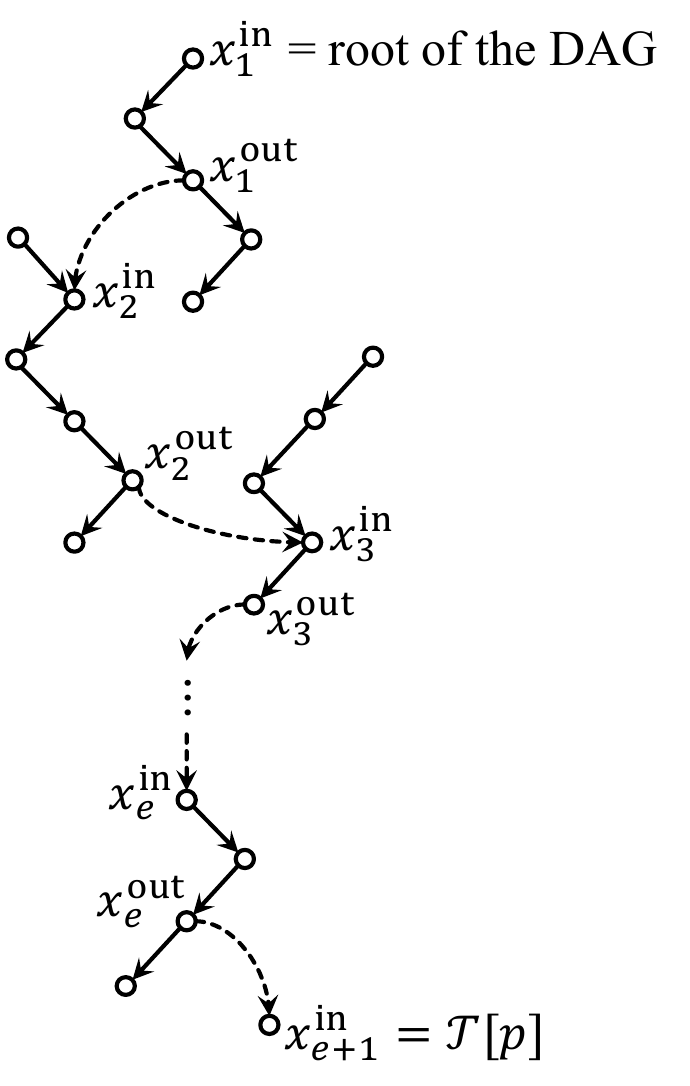}
 }
 \caption{Illustration for the high-level strategy to achieve $O(\log N)$-time random access.
   The path from the root to the target leaf $x_{e+1}^{\mathsf{in}} = \gtext[p]$ contains $e~(\le 2 \lg N)$ non-SC-edges
   $(x_1^{\mathsf{out}}, x_2^{\mathsf{in}})$, $(x_2^{\mathsf{out}}, x_3^{\mathsf{in}})$, $\dots$ and $(x_{e}^{\mathsf{out}}, x_{e+1}^{\mathsf{in}})$
   depicted by dashed arrows.
   The components connected by plain arrows are SC-paths.
   Our sub-goal is to move from $x_{i}^{\mathsf{in}}$ to $x_{i+1}^{\mathsf{in}}$ efficiently 
   in $O(1 + \log |\expand{x_{i}^{\mathsf{in}}}| - \log |\expand{x_{i+1}^{\mathsf{in}}}|)$ time.
 }
 \label{fig:hls}
\end{figure}

For an SC-path $(u_1, u_2, \dots, u_m)$ with $m$ nodes, 
there are $m+1$ non-SC-edges branching out from the SC-path,
which splits $\expand{u_1}$ into $m+1$ substrings.
A subproblem in question is to find the non-SC-edge $(u_j, v)$ such that 
$v$ contains the target position in time $O(1 + \log |\expand{u_1}| - \log |\expand{v}|)$.
In Subsection~\ref{sec:ibst}, we take this subproblem as a general interval search problem and show how to solve it.

\subsection{Compacted Binary Tries for Interval-Biased Search}\label{sec:ibst}
In this subsection, we consider the following problem.
\begin{problem}[Interval Search Problem]\label{isp}
  Preprocess a sequence $g_1, g_2, \dots, g_m$ of integers such that $g_0 = 0 < g_1 < g_2 < \dots < g_m$,
  to support interval search queries that ask, given an integer $0 < p \le g_m$, to compute $k$ with $p \in (g_k..g_{k+1}]$.
\end{problem}
Data structures for this problem have been extensively studied in the context of the predecessor search problem (e.g. y-fast trie~\cite{Willard1983LWR}).

Bille et al.~\cite{2015BilleLRSSW_RandomAccesToGrammCompr} proposed the interval-biased search tree
to answer the interval search query in $O(1 + \log g_m - \log(g_{i+1} - g_{i}))$ time.
The interval-biased search tree for $m$ intervals $(g_0..g_1], (g_1..g_2], \dots, (g_{m-1}..g_{m}]$ is a binary tree with $m$ nodes defined as follows:
The root is set to be the interval $(g_{m'}..g_{m'+1}]$ that contains the position $(g_{m} - g_{0})/2$
and its left (resp. right) child subtree is defined recursively for
$(g_0..g_1], \dots ,(g_{m'-1}..g_{m'}]$ (resp. $(g_{m'+1}..g_{m'+2}], \dots ,(g_{m-1}..g_{m}]$) if $m' > 0$ (resp. $m' < m-1$).

Unfortunately, it does not seem straightforward to work on the succinct tree representation of the interval-biased search tree:
Although one can observe that the node with in-order rank $i+1$ 
(defined in a reasonable way for trees in which each edge is categorized into left or right)
corresponds to interval $(g_i, g_{i+1}]$ in the interval-biased search tree,
the computation is not supported by the succinct tree representation of~\cite{2014NavarroS_FullyFunctStaticAndDynam}.
\footnote{Note that $\mathit{in\_rank}$ queries in~\cite{2014NavarroS_FullyFunctStaticAndDynam} work only for the nodes that have at least two children.}

In this paper, we instead show that the compacted binary trie (also known as compressed binary trie or Patricia trie~\cite{1968Morrison_PatricPractAlgorToRetriev}) 
for the binary representations of integers $g_1, g_2, \dots, g_m$ can be used for interval-biased search.
\begin{lemma}\label{lem:ibs}
  For Interval Search Problem, there is a data structure of $m \lceil \lg g_m \rceil + 2m + o(m)$ bits 
  to support interval search queries in $O(1 + \log g_m - \log(g_{k+1} - g_{k}))$ time,
  where $k$ is the answer of the query.
\end{lemma}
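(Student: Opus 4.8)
The plan is to build the compacted binary trie (Patricia trie) over the binary strings of length $\lceil \lg g_m \rceil$ representing $g_1, g_2, \dots, g_m$, store it with the succinct full-binary-tree machinery of Lemma~\ref{lem:poe}, and show that a root-to-leaf descent guided by the bits of the query value $p$ lands on the correct interval within the claimed time bound. First I would observe that the $m$ leaves of this trie, read left to right, correspond to $g_1 < g_2 < \dots < g_m$ in order, so the $k$-th leaf (in left-to-right order, i.e. $\leafrank$) encodes $g_k$; an internal node with two children corresponds to a branching bit position. Since the trie is a full binary tree with $m$ leaves and hence $2m-1$ nodes, it fits in $2m + o(m)$ bits via Lemma~\ref{lem:poe}, and alongside it I would store, for each node in post-order, the number of bits ``skipped'' on the edge entering it together with the actual skipped bits; aggregated over all nodes this is exactly $m\lceil \lg g_m\rceil$ bits (each of the $\lceil \lg g_m\rceil$ bit positions of each $g_k$ is accounted for once along the leaf path), plus the $o(m)$ from rank/select. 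Actually it is cleaner to store, for each of the $m$ leaves, the full $\lceil \lg g_m\rceil$-bit string of $g_k$ in an array indexed by leaf rank, giving $m\lceil \lg g_m\rceil$ bits directly, and to recover branch/skip information on the fly by comparing consecutive leaves — but storing per-edge skip counts makes the descent $O(1)$ per node, so I would go with the per-node skip-length encoding and reconstruct skipped bits from any descendant leaf.

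Next I would describe the query algorithm. Given $p$, interpret its $\lceil \lg g_m\rceil$-bit binary representation and descend from the root: at each internal node, consult the stored skip bits (which all strings in that subtree agree on) to advance the bit pointer, then branch left if the next bit of $p$ is $0$ and right if it is $1$; when we reach a leaf $v$ with $\leafrank(v) = k$, we have found the candidate. The key correctness claim is that this $k$ satisfies $p \in (g_{k-1}..g_k]$ or $p \in (g_k..g_{k+1}]$ depending on the comparison of $p$ against $g_k$ (whose bits we now know), after which a single comparison fixes the off-by-one and returns the answer. This works because the trie leaf reached by following $p$'s bits is the one whose string shares the longest common prefix with $p$ among all $g_i$, and that leaf's neighbors in leaf order bracket $p$; handling the case where $p$ diverges on a skipped edge (so $p$ is not a prefix-extension of any stored string) just means the descent terminates conceptually at the node where divergence happens, and $\rmleaf$ / $\lchild$ / $\rchild$ let us pick the bracketing leaf there. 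Using $\isleaf$, $\lchild$, $\rchild$, $\rmleaf$, $\leafrank$ from Lemma~\ref{lem:poe}, each step costs $O(1)$.

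For the time bound I would argue that the number of nodes visited on the descent to the leaf encoding $g_k$ is $O(1 + \log g_m - \log(g_{k+1}-g_k))$. The intuition, as in the interval-biased search tree, is that the trie is ``balanced by mass'': the number of edges from the root to the leaf for $g_k$ is the number of bit positions at which the binary interval containing $g_k$ gets halved before it is isolated, and an interval of width $g_{k+1}-g_k$ survives roughly $\lg g_m - \lg(g_{k+1}-g_k)$ halvings. More carefully, I would track a dyadic interval at each node of the trie — the node reached after reading prefix $b_1\cdots b_j$ owns the integer interval $[\,\mathrm{val}(b_1\cdots b_j)\cdot 2^{\lceil\lg g_m\rceil - j}+1 \;..\; (\mathrm{val}(b_1\cdots b_j)+1)\cdot 2^{\lceil\lg g_m\rceil - j}\,]$ — and note that as long as this dyadic interval contains two distinct $g_i$'s it has length at least... well, at least $g_{k+1}-g_k$ when it contains both $g_k$ and a neighbor, forcing the depth where $g_k$'s leaf separates to be at most $\lceil\lg g_m\rceil - \lfloor\lg(g_{k+1}-g_k)\rfloor + O(1)$; and compaction only removes nodes, never adding to the path length, while adding $O(1)$ amortized because skipped bits are traversed in constant time per edge. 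The main obstacle I anticipate is exactly this depth bound: one must be careful that a naive compacted trie depth is $O(\lceil\lg g_m\rceil)$ in the worst case (when $g_{k+1}-g_k=1$) but refines to the biased bound $O(1+\log g_m - \log(g_{k+1}-g_k))$, and the subtle point is that the \emph{last} edge into $g_k$'s leaf may be a long skip edge yet still only contributes $O(1)$ to the traversal cost — so the bound on \emph{number of trie nodes on the path}, not raw bit-depth, is what matters, and I would need to verify that the branching nodes above $g_k$'s leaf number at most $O(1+\log g_m - \log(g_{k+1}-g_k))$ by the dyadic-interval argument, plus handle degenerate cases $k=0$ and $k=m$ where one neighbor is absent (replace $g_{k+1}-g_k$ or $g_k - g_{k-1}$ by the relevant one-sided gap or by $g_m$ itself as needed).
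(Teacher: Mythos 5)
The core of your proposal is a bit-guided Patricia-trie descent that lands on a leaf and then fixes an off-by-one by a single comparison. This does not give the claimed time bound, and the reason is precisely the one you flagged as an anticipated obstacle: the number of branching nodes above $g_k$'s leaf is \emph{not} $O(1 + \log g_m - \log(g_{k+1}-g_k))$ in general. The depth of $g_k$'s leaf is governed by $g_k$'s distance to \emph{both} of its neighbours, $\min(g_k - g_{k-1},\, g_{k+1}-g_k)$; only the depth of the LCA of $v_k$ and $v_{k+1}$ is controlled solely by the gap $g_{k+1}-g_k$. Concretely, take $L = \lceil\lg g_m\rceil$ and $g_i = 2^{L-1} - 2^{L-1-i}$ for $i = 1,\dots,L-1$, with $g_L = g_m = 2^{L-1} + 2^{L-2}$. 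The trie's left subtree (under the root) is a length-$(L-2)$ right-leaning chain ending at the leaf for $g_{L-1} = 2^{L-1}-1$. Now query any $p \in (g_{L-1}, g_L)$ with $p < 2^{L-1}$, e.g.\ $p$ agreeing with $g_{L-1}$ on the first $L-1$ bits but with a different suffix (impossible here since $g_{L-1}$ ends in all ones, so instead shift the construction by one: $g_i = 2^{L-1}-2^{L-i}$, $g_{L-1} = 2^{L-1}-2$, $p = 2^{L-1}-1$). Then $g_{k+1}-g_k = g_L - g_{L-1} \approx 2^{L-2}$, so the claimed bound is $O(1)$, yet your descent follows $p$'s bits down the entire right-leaning chain to the leaf of $g_{L-1}$ (no divergence occurs on any skip edge because $p$ and $g_{L-1}$ agree on every bit except the last), visiting $\Theta(L) = \Theta(\log g_m)$ nodes. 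Your early-termination-on-divergence rule does not help because $p$ need not diverge from $g_k$'s bit string until the very last edge.

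The paper's search is structurally different in exactly the way that matters: it is not bit-guided at all. At every internal node $u$ it recovers, via $i = \leafrank(\rmleaf(\lchild(u)))$, the interval $(A[i]..A[i+1]]$ that $u$ (the LCA of leaves $v_i,v_{i+1}$) represents, tests membership, and if the test fails compares $p$ against $A[i]$ or $A[i+1]$ to decide the direction. The search therefore \emph{stops at the LCA of $v_k$ and $v_{k+1}$}, never descending below it, and the depth of that LCA is what the dyadic/uncompacted-height argument bounds by $\lg g_m - \lg(g_{k+1}-g_k)$. No skip-bit information is consulted or needed during navigation (the trie topology is used only as a biased search tree; values come from the array $A$), which also makes your alternative ``per-edge skip counts and skipped bits'' encoding unnecessary — the plain leaf-indexed array you called the cleaner option is what the paper uses. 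To repair your proof you would need to replace the bit-guided descent with this interval-membership test at every internal node, at which point the argument coincides with the paper's.
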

\begin{proof}
  We store $g_1, g_2, \dots, g_m$ naively in an array $A[1..m] \in [1..g_m]^{m}$ using $m \lceil \lg g_m \rceil$ bits of space.
  For any $1 \le i \le m$, let $b_i$ denote the $(\lceil \lg g_m \rceil)$-bit binary representation of $g_i$,
  and $c_i$ be the bit string that is obtained from $b_i$ by removing the most significant bits common to both $b_1$ and $b_m$.
  The compacted binary trie for $c_1, c_2, \dots, c_m$ forms a full binary tree with $m$ leaves and $m-1$ internal nodes and 
  its tree topology can be represented in $2m + o(m)$ bits by the data structure of Lemma~\ref{lem:poe}.
  Note that the $i$-th leaf $v_i$ corresponds to $c_i$ (and also $g_i = A[i]$), 
  and the interval $(g_i..g_{i+1}]$ can be uniquely assigned to the lowest common ancestor $u_i$ of $v_i$ and $v_{i+1}$.
  The crucial point for our interval-biased search is that the depth of $u_i$ in the tree is less than $\lceil \lg g_m \rceil - \lg (g_{i+1} - g_{i})$, 
  because the depth is upper bounded by the length $\ell$ of the longest common prefix between $b_{i}$ and $b_{i+1}$, 
  which must satisfy $2^{\lceil \lg g_m \rceil - \ell} > g_{i+1} - g_{i}$ to have two leaves $v_{i}$ and $v_{i+1}$ below $u_i$ in the trie.
  Hence, we can answer interval search queries in $O(1 + \log g_m - \log (g_{i+1} - g_{i}))$ time by a simple traversal from the root:
  When we arrive at an internal node $u$ (given by its post-order rank), 
  we can compute in $O(1)$ time its corresponding interval $(A[i]..A[i+1]]$ using $i = \leafrank(\rmleaf(\lchild(u)))$.
  We return $i$ if $p \in (A[i]..A[i+1]]$, and otherwise,
  we move to the left child $\lchild(u)$ (resp. right child $\rchild(u)$) if $p \le A[i]$ (resp. $A[i+1] < p$).
  A final remark is that we return $0$ if $p \le g_1 = A[1]$, i.e., $p \in (g_0..g_{1}]$, which should be checked prior to the binary search.
\end{proof}

\subsection{Our Encoding (I) of Theorem~\ref{theorem:enc}}
First, we arrange the variables of $\gslp$ so that $v = u+1$ holds for every SC-edge $(u, v)$ on $\gdag_{\gslp}$.
As a result, we can now assume that the variables on the same SC-path are given consecutive integers
while distinct SC-paths are ordered arbitrarily.
Let $n'$ be the number of SC-paths of $\gdag_{\gslp}$ excluding the ones that consist of the terminal symbols.

Our encoding stores a bit string $\pa[1..n]$ to separate $[1..n]$ into $n'$ segments of SC-paths, i.e.,
for any variable $u~(1 \le u \le n)$, $\pa[u] = 1$ if and only if $u$ is the last node of an SC-path.
We remark that $\pa$ contains $n'$ ones and $n-n'$ zeros.

Next we show how to encode the information of $\dic$.
For any $u$ with $\pa[u] = 0$, one of its children is $u+1$, and hence, we can store $\dic$ in $(n + n') \lceil \lg (n+\sigma) \rceil + n - n'$ bits
in addition to $n$ bits of $\pa$ as follows.
Let $\da[1..n-n']$ be a bit string of length $n-n'$ that indicates the direction of the non-SC-edge $(u, v)$ branching out from every node $u$ with $\pa[u] = 0$,
and $\dicone[1..n-n'] \in [1..n+\sigma]^{n-n'}$ be an integer array of length $n-n'$ that stores $v$'s.
More precisely, for any variable $u~(1 \le u \le n)$ with $\pa[u] = 0$, 
we set $\da[\rank_0(\pa, u)]$ to be $0$ if and only if the non-SC-edge $(u, v)$ branching out from $u$ is the left child,
and store $v$ in $\dicone[\rank_0(\pa, u)]$.
For the nodes $u$ with $\pa[u] = 1$, we create another integer array $\dictwo[1..2n'] \in [1..n+\sigma]^{2n'}$ such that
$\dictwo[2 \cdot \rank_1(\pa, u)-1..2 \cdot \rank_1(\pa, u)]$ store $\dic(u)$.

Using $\pa$, $\da$, $\dicone$ and $\dictwo$, we can compute in $O(1)$ time the children $\dic(u)$ of a given variable $u$
if we augment $\pa$ with the data structure of Lemma~\ref{lem:rrr} for rank/select queries.
Also, $r = \rank_1(\pa, u-1) + 1$ tells us that $u$ is on the $r$-th SC-path on $\pa$,
and the interval for the SC-path $u$ belongs to can be computed in $O(1)$ time by $[\select_1(\pa, r-1)+1..\select_1(\pa, r)]$.

Finally, we add $(n \lceil \lg N \rceil + 2n - n' + o(n))$-bits data structure to achieve $O(\log N)$-time random access.
From now on, we focus on the $r$-th SC-path.
Let $m := \select_1(\pa, r) - \select_1(\pa, r-1)$ and $u_i := \select_1(\pa, r-1) + i$ for any $0 \le i \le m$, 
then $(u_1, u_2, \dots, u_m)$ is the sequence of nodes on the $r$-th SC-path.
Let $t$ be the number of non-SC-edges $(u_i, v)$ such that $v$ is the left child of $u_i$ for some $1 \le i < m$,
which can be computed by $\rank_0(\da, d_0+m-1) - \rank_0(\da, d_0)$ with $d_0 := \rank_0(\pa, u_{0})$.
Let $v_1, v_2, \dots, v_{m+1}$ be the endpoints of the non-SC-edges branching out from the SC-path 
sorted by the preorder of left-to-right traversal from $u_1$ so that $\expand{u_1} = \expand{v_1} \expand{v_2} \cdots \expand{v_{m+1}}$.
For any $1 \le i \le m+1$ it holds that
\begin{equation}\label{eq:v}
  v_i =
  \begin{cases}
    \dicone[\select_{0}(\da, \rank_{0}(\da, d_0) + i)]            & \mbox{if $i \le t$},\\
    \dictwo[2 \cdot \rank_{1}(\pa, u_{0}) + i - t]                      & \mbox{if $t < i \le t+2$},\\
    \dicone[\select_{1}(\da, \rank_{1}(\da, d_0) + m + 2 - i]    & \mbox{if $t + 2 < i$}.
  \end{cases}
\end{equation}
Hence, if we augment $\da$ with the data structure of Lemma~\ref{lem:rrr} for rank/select queries, we can compute $v_i$ in $O(1)$ time.

In order to support interval-biased search on the SC-path, 
we can use the data structure of Lemma~\ref{lem:ibs} for the sequence $(\sum_{i' = 1}^{i} |\expand{v_{i'}}|)_{i = 1}^{m+1}$ of prefix sums of $|\expand{v_i}|$'s.
However, if we store $m+1$ integers for the SC-path of $m$ nodes, the number adds up to $n+n'$ in total for all SC-paths.
An easy way to reduce this number to $n$ is to exclude the largest prefix sum $\sum_{i' = 1}^{m+1} |\expand{v_{i'}}| = |\expand{u_1}|$ for every SC-path.
$|\expand{u_1}|$ is not needed for efficient random access queries as 
we can immediately proceed to $v_{m+1}$ if a target position is greater than $\sum_{i' = 1}^{m} |\expand{v_{i'}}|$.
One drawback of not storing $|\expand{u_1}|$ explicitly is that 
the data structure cannot answer $|\expand{u_i}|$ efficiently for a given variable $u_i$
if the path from $u_1$ to $u_i$ consists only of right edges.
Since such queries might be useful in some scenarios, we show an alternative way in the following.

We decompose $\expand{u_1}$ into $m$ strings $\expand{v_1}, \dots, \expand{v_{k}}, \expand{u_m}, \expand{v_{k+3}}, \dots, \expand{v_{m+1}}$
and construct the data structure of Lemma~\ref{lem:ibs} for the sequence $g_1, g_2, \dots, g_m$ of prefix sums of these $m$ string lengths.
Now we can compute $|\expand{u_i}|$ by subtracting the prefix sum before $u_i$ begins from the prefix sum where $u_i$ ends on $\expand{u_1}$.
Since there are $\rank_{0}(\da, d_0+i-1) - \rank_{0}(\da, d_0)$ (resp. $\rank_{1}(\da, d_0+i-1) - \rank_{1}(\da, d_0)$) non-SC-edges branching out to the left (resp. right) along the path from $u_1$ to the parent of $u_i$, 
we can compute $|\expand{u_i}|$ in $O(1)$ time by $g_{i'} - g_{i''}$,
where $i' = m - (\rank_{1}(\da, d_0+i-1) - \rank_{1}(\da, d_0))$ and $i'' = \rank_{0}(\da, d_0+i-1) - \rank_{0}(\da, d_0)$.
During a search on the SC-path, if the target position falls in $\expand{u_m} = \expand{v_{t+1}} \expand{v_{t+2}}$,
we can decide which of $\expand{v_{t+1}}$ and $\expand{v_{t+2}}$ contains the target position using $|\expand{v_{t+1}}|$.
We can also compute in $O(1)$ time the prefix sum of $|\expand{v_{i'}}|$'s up to $i~(1 \le i \le m+1)$ as
\begin{equation}\label{eq:prefixsum}
  \sum_{i' = 1}^{i} |\expand{v_{i'}}| =
  \begin{cases}
    g_i                      & \mbox{if $i \le t$},\\
    g_t + |\expand{v_{t+1}}|  & \mbox{if $i = t+1$},\\
    g_{i-1}                  & \mbox{if $t + 1 < i$}.
  \end{cases}
\end{equation}

We use a global integer array $\lensum[1..n] \in [1..N]^{n}$ to store $g_1, g_2, \dots, g_m$,
in $\lensum[\select_1(\pa, r-1)+1..\select_1(\pa, r)]$.
We also consider concatenating all post-order encodings for compacted binary tries to work on a single bit string:
We store a bit string $\cbt := B_{1}B_{2} \cdots B_{n'}$ of length $2n-n'$,
where $B_{r}$ is the post-order encoding of the compacted binary trie for the $r$-th SC-path on $\pa$,
which consists of $m$ zeros (leaves) and $m-1$ ones (internal nodes).
Since every prefix of $\cbt$ has more zeros than ones,
$\cbt$ can be augmented by the succinct data structure of~\cite{2014NavarroS_FullyFunctStaticAndDynam} 
to support queries of Lemma~\ref{lem:poe} for each compacted binary trie.
In short, interval-biased search on the $r$-th SC-path can be performed using the information stored in
$B_{r} = \cbt[2 \cdot \select_1(\pa, r-1)+2-r..2 \cdot \select_1(\pa, r)+1-r] = \cbt[2u_1-r..2u_m+1-r]$ 
and $\lensum[\select_1(\pa, r-1)+1..\select_1(\pa, r)] = \lensum[u_1..u_m]$.

\begin{figure}[h!]
 \centering{
   \includegraphics[scale=0.4]{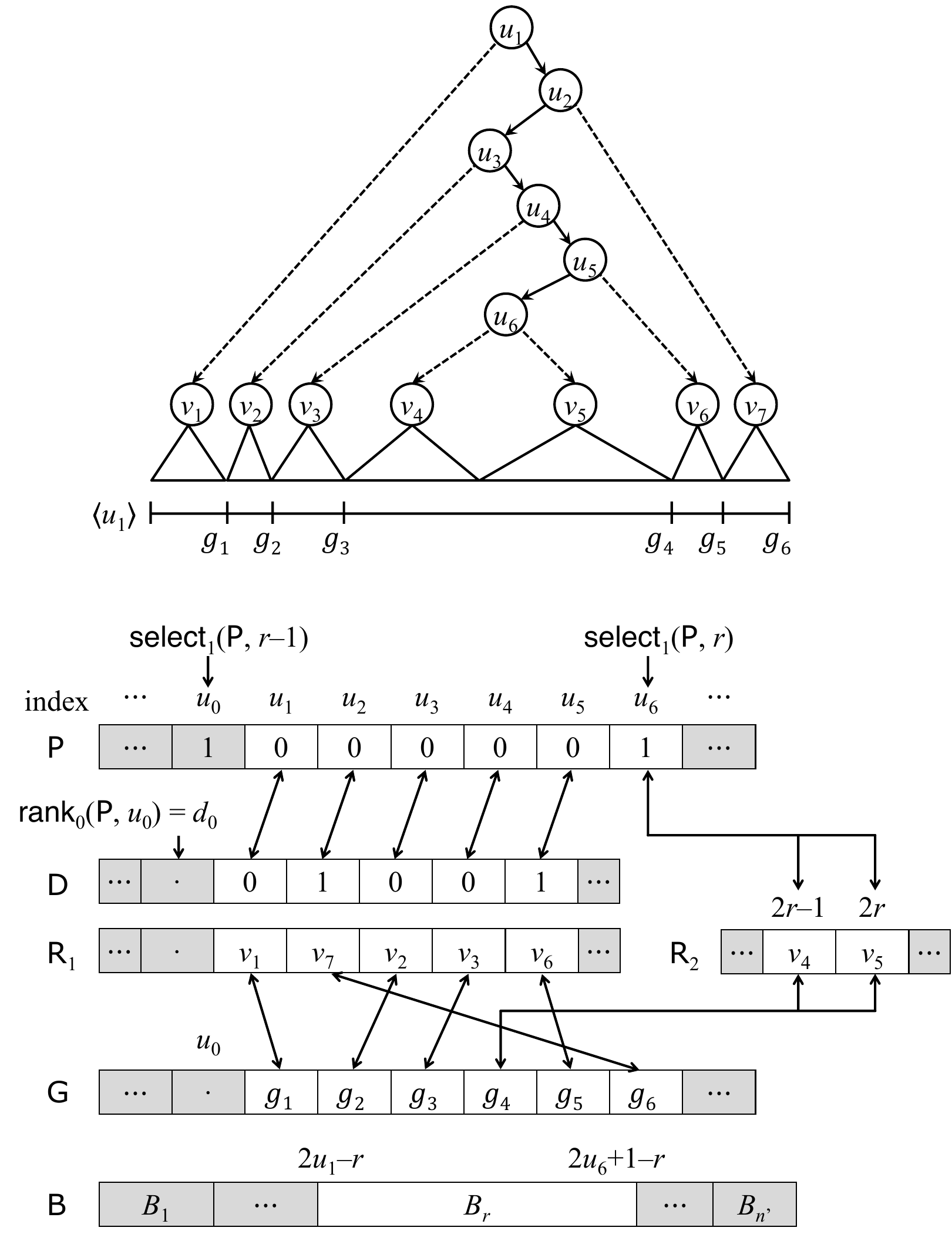}
 }
 \caption{Illustration for our encoding (I).
   Supposing that the $r$-th SC-path has $6$ nodes $(u_1, u_2, u_3, u_4, u_5, u_6)$ in the form depicted above,
   the layout of the information for this SC-path in $\pa$, $\da$, $\dicone$, $\dictwo$, $\lensum$ and $\cbt$ is shown below.
 }
 \label{fig:sc-path-enc}
\end{figure}

We are now ready to show the result (I) of Theorem~\ref{theorem:enc}:
\begin{proof}
  Encoding (I) consists of $\pa$, $\da$, $\dicone$, $\dictwo$, $\lensum$ and $\cbt$~(see Fig.~\ref{fig:sc-path-enc} for an illustration),
  and succinct data structures built on $\pa$, $\da$ and $\cbt$,
  which clearly fits in $n \lceil \lg N \rceil + (n + n') \lceil \lg (n+\sigma) \rceil + 4n - 2n' + o(n)$ bits in total.

  We follow the strategy described in Subsection~\ref{sec:overview} and use the variables introduced there in the following explanation.
  Suppose that we now arrive at $x_{i}^{\mathsf{in}}$ for some $1 \le i < e$,
  and we know that the relative target position in $\expand{x_{i}^{\mathsf{in}}}$ is $p_i$.
  Our task is to find the non-SC-edge $(x_{i}^{\mathsf{out}}, x_{i+1}^{\mathsf{in}})$
  such that $\expand{x_{i+1}^{\mathsf{in}}}$ contains $\expand{x_{i}^{\mathsf{in}}}[p_i]$ by interval-biased search 
  in $O(1 + \log |\expand{x_{i}^{\mathsf{in}}}| - \log |\expand{x_{i+1}^{\mathsf{in}}}|)$ time
  on the SC-path $(u_1, u_2, \dots, u_m) = (\select_1(\pa, r-1)+1, \select_1(\pa, r-1)+2, \dots, \select_1(\pa, r))$ with $r = \rank_1(\pa, x_{i+1}^{\mathsf{in}}-1) + 1$.
  Recall that $\lfloor \lg |\expand{u_{m}}| \rfloor = \lfloor \lg |\expand{u_{1}}| \rfloor$ by the definition of the symmetric centroid decomposition.
  Since $O(\log |\expand{x_{i}^{\mathsf{in}}}| - \log |\expand{x_{i+1}^{\mathsf{in}}}|) = O(\log |\expand{u_{1}}| - \log |\expand{x_{i+1}^{\mathsf{in}}}|)$,
  it is fine to start interval-biased search from the root of the compacted binary trie for the SC-path
  to the relative target position $p'_{i}$ in $\expand{u_1}$.
  We set $p'_{i} = p_i + \lensum[u_1+t'-1]$ if $t' > 0$, and otherwise $p'_{i} = p_i$, where
  $t'$ is the number of non-SC-edge branching out to the left from the path $(u_1, u_2, \dots, u_{j-1})$,
  which can be computed by $t' = \rank_{0}(\da, d_0 + x_{i}^{\mathsf{in}} - u_1) - \rank_{0}(\da, d_0)$ with $d_0 = \rank_0(\pa, u_{1} - 1)$.

  Let $v_1, v_2, \dots, v_{m+1}$ be the endpoints of the non-SC-edges branching out from the SC-path 
  sorted by the preorder of left-to-right traversal from $u_1$ so that $\expand{u_1} = \expand{v_1} \expand{v_2} \cdots \expand{v_{m+1}}$.
  Interval-biased search (and possibly some additional work if the target position falls in the child of $u_m$)
  finds the index $s$ with $x_{i+1}^{\mathsf{in}} = v_{s}$ from which we can compute the value of $x_{i+1}^{\mathsf{in}}$ using Equation~\ref{eq:v}.
  We also compute the relative target position $p_{i+1}$ in $\expand{x_{i+1}^{\mathsf{in}}}$
  as $p_{i+1} = p'_{i} - \sum_{i' = 1}^{s-1} |\expand{v_{i'}}|$ using Equation~\ref{eq:prefixsum}.

  In order to retrieve $q - p + 1$ consecutive characters from $p$ efficiently, 
  we push the pair of index $s$ and $z$ in a stack before moving to $v_{s}$,
  where $v_{z}$ is the rightmost variable in the subtree rooted at $x_{i}^{\mathsf{in}}$,
  which can be computed by $z = m + 1 - (\rank_{1}(\da, d_0 + x_{i}^{\mathsf{in}} - u_1) - \rank_{1}(\da, d_0))$.
  If $|\expand{v_{s}}[p_{i+1}..]| < q - p + 1$, we will be back to this SC-path
  after computing $\expand{v_{s}}[p_{i+1}..] = \gtext[p..p+|\expand{v_{s}}| - p_{i+1}]$ below $v_{s}$:
  We pop $s$ and $z$, and move to $v_{s+1}$ to expand the prefix of $\expand{v_{s+1}}$ if $s+1 \le z$,
  and otherwise go back to the previous SC-path.
  This process goes on until we get $q - p + 1$ consecutive characters.

  Note that, for any variable $x$, we can expand $\expand{x}$ in $O(|\expand{x}|)$ time
  because the derivation tree of $x$ has $O(|\expand{x}|)$ nodes.
  During the computation of $\gtext[p..q]$,
  we meet $O(\log N)$ variables that partially contribute to $\gtext[p..q]$ (such as $x_{i}^{\mathsf{in}}$'s)
  along the paths from the root to $p$ and $q$ based on SC-paths.
  Once we find these marginal paths in $O(\log N)$ time, the incompletely expanded variables can be decomposed into 
  the sequence of fully expanded variables for which the cost of expansion can be charged to $O(q - p)$.
  Hence, we can compute $\gtext[p..q]$ in $O(\log N + q - p)$ time.
\end{proof}

\subsection{Our Encoding (II) of Theorem~\ref{theorem:enc}}
Our Encoding (II) follows the basic strategy of encoding (I) but considers reducing $(n + n') \lceil \lg (n + \sigma) \rceil$ bits, 
which are used by $\dicone$ and $\dictwo$ for the information of the endpoints of the $n+n'$ non-SC-edges, 
to $n \lceil \lg (n + \sigma) \rceil + n + 3n' + o(n)$ bits.
In so doing, we arrange the SC-paths in a certain order (which is arbitrary in encoding (I)) 
so that $n'-1$ endpoints can be retrieved implicitly through a tree $\treee$ defined below.

Firstly, $\treee$ has $n'$ nodes each of which corresponds to an SC-path.
For every SC-path $Q$ that does not start with the root of $\gdag_{\gslp}$,
there is at least one non-SC-edge coming into the topmost node of $Q$.
Then we choose arbitrary one of those and add an edge $(P, Q)$ into $\treee$, 
where $P$ is the SC-path in which the starting node of the chosen non-SC-edge exists.
In total, we choose $n'-1$ non-SC-edges, and the added edges on $\treee$ ensure that 
$\treee$ becomes a tree rooted at the SC-path starting with the root of $\gdag_{\gslp}$.
In addition, if an SC-path $P$ has distinct children $Q$ and $Q'$ in $\treee$,
they are ordered by their corresponding chosen non-SC-edges sorted by the preorder of left-to-right traversal from $u_1$.

Now we consider aligning the SC-paths on $\pa$ in breadth-first order on $\treee$.
\footnote{Any other order such as preorder and post-order is fine as long as we can efficiently compute the rank under the order on a succinct representation of $\treee$ of $2n' + o(n')$ bits. In this paper we adopt breadth-first order as all operations needed on $\treee$ can be implemented simply on LOUDS.}
On the Level-Ordered Unary Degree Sequence (LOUDS) representation~\cite{1989Jacobson_SpaceEfficStaticTreesAnd_FOCS} of $\treee$ 
we can compute, given a breadth-first order rank of an SC-path $P$ and an integer $i$, the breadth-first order rank of the $i$-th child of $P$ on $\treee$
in $O(1)$ time and $2n' + o(n')$ bits of space.
Let $L$ be the integer array of length $n+n'$ such that the endpoints of the non-SC-edges branching out from the $r$-th non-SC-path are stored in $L[\select_1(\pa, r-1)+r..\select_1(\pa, r)+r]$ in left-to-right order.
Let $\marke$ be the bit string of length $n+n'$ such that $\marke[i] = 1$ if and only if $L[i]$ corresponds to the endpoint of a non-SC-edge chosen in the construction of $\treee$.
By definition, $\marke$ contains $n'-1$ ones and $n+1$ zeros.
Then let $\dice$ be the integer array of length $n+1$ such that $\dice[i] = L[\select_0(\marke, i)]$ for any $i~(1 \le i \le n+1)$.
Note that encoding (II) stores $\dice$ but not $L$.
While the endpoint $L[i]$ with $\marke[i] = 0$ is stored explicitly in $\dice$,
the endpoint $L[i]$ with $\marke[i] = 1$ is retrieved using $\treee$.

\begin{figure}[h!]
 \centering{
   \includegraphics[scale=0.4]{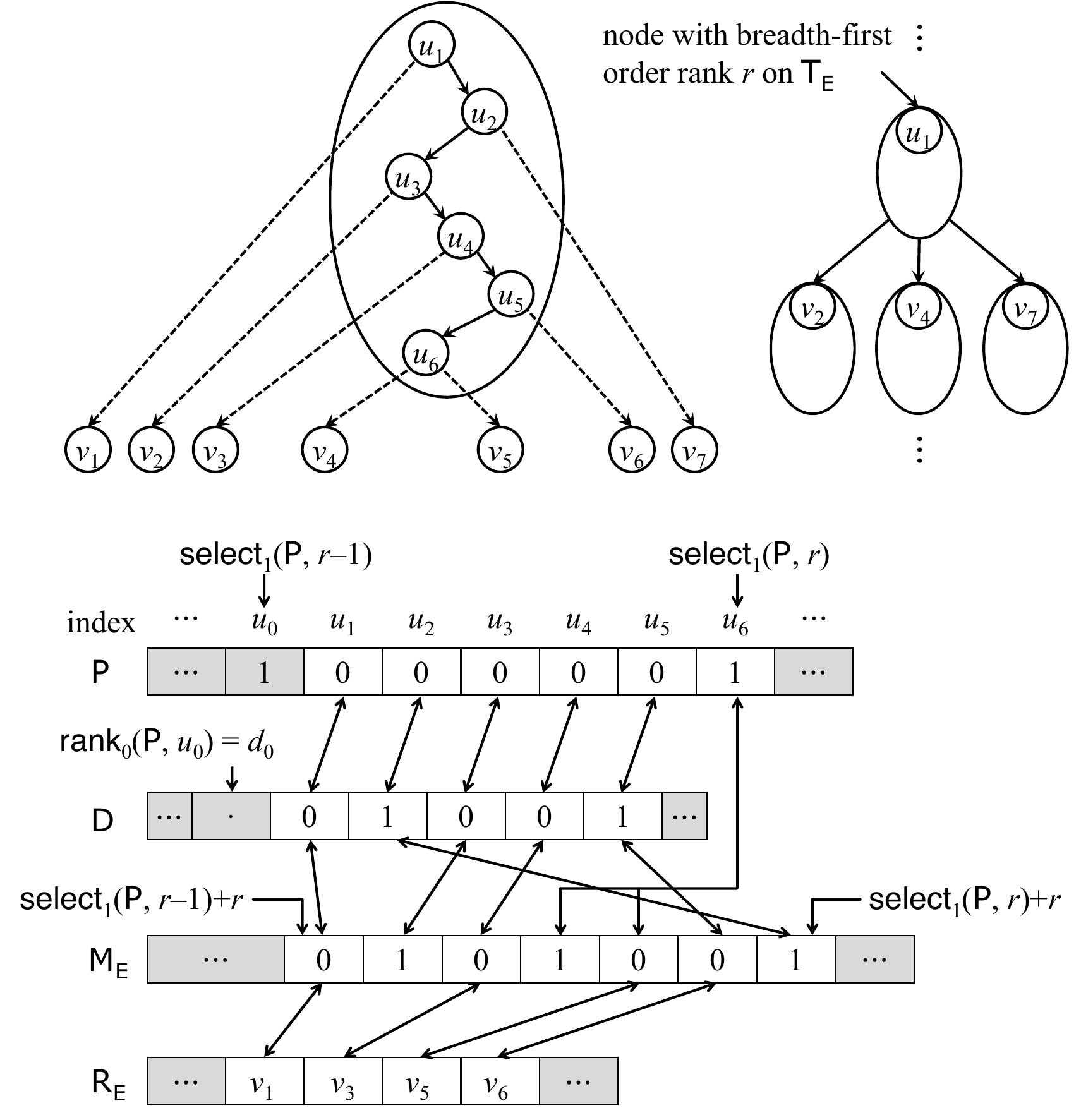}
 }
 \caption{Illustration for our encoding (II).
   Supposing that the $r$-th SC-path has $6$ nodes $(u_1, u_2, u_3, u_4, u_5, u_6)$ in the form depicted left above
   and has three child SC-paths on $\treee$ starting with $v_2, v_4$ and $v_7$,
   the layout of the information for this SC-path in $\pa$, $\da$, $\marke$ and $\dice$ is shown below.
 }
 \label{fig:sc-path-enc-2}
\end{figure}

We are now ready to show the result (II) of Theorem~\ref{theorem:enc}:
\begin{proof}
  Encoding (II) consists of $\pa$, $\da$, $\dice$, $\marke$, $\treee$, $\lensum$ and $\cbt$,
  and succinct data structures built on $\pa$, $\da$, $\marke$, $\treee$ and $\cbt$.
  Compared to encoding (I), $\dicone$ and $\dictwo$ of $(n + n') \lceil \lg (n+\sigma) \rceil$ bits 
  are replaced with $\dice$, $\marke$ and $\treee$ of $n \lceil \lg (n + \sigma) \rceil + n + 3n' + o(n)$ bits.
  Thus the space usage fits in $n \lceil \lg N \rceil + n \lceil \lg (n+\sigma) \rceil + 5n + n' + o(n)$ bits in total.

  Algorithmic difference from encoding (I) is only in how we get the endpoints of non-SC-edges.
  Let us consider the $r$-th SC-path $P$ on $\pa$.
  Note that the sorted list of endpoints of non-SC-edges for the $r$-th SC-path on $\pa$ 
  is $L[\select_{1}(\pa, r-1)+r..\select_{1}(\pa, r)+r]$, which is not stored explicitly.
  Now suppose that we want to compute $L[i]$ for $\select_{1}(\pa, r-1)+r \le i \le \select_{1}(\pa, r)+r$.
  If $\marke[i] = 0$, we get $L[i]$ directly from $\dice[\rank_0(\marke, i)]$.
  If $\marke[i] = 1$, $L[i]$ is the topmost node of the SC-path $Q$,
  where $Q$ is the $k$-th child of $P$ on $\treee$ with $k = \rank_1(\marke, i) - \rank_1(\marke, \select_{1}(\pa, r-1)+r-1)$.
  Then we compute the breadth-first order rank $r'$ of the $k$-th child $Q$ of $P$ on $\treee$
  and get $L[i]$ by $\select_{1}(\pa, r'-1) + 1$.
  See Fig.~\ref{fig:sc-path-enc-2} for an illustration to see how 
  $\dice$, $\marke$ and $\treee$ store the endpoints of non-SC-edges.

  As shown above, getting $L[i]$ takes in $O(1)$ time.
  Any other task for random access is done in the same way as encoding (I).
\end{proof}

\subsection{Our Encoding (III) of Theorem~\ref{theorem:enc}}
Encoding (II) chooses $n'-1$ non-SC-edges so that we can define a tree that connects all non-leaf SC-paths to represent their endpoints implicitly, taking $n \lceil \lg (n + \sigma) \rceil + n + 3n' + o(n)$ bits for this part.
Encoding (III) uses a simpler strategy, which takes $n \lceil \lg (n + \sigma) \rceil + n + n' + \sigma + o(n+\sigma)$ bits instead:

We fix a uniform way, that can be applied to every SC-path, to choose exactly one endpoint from the list of endpoints of the non-SC-edges branching out from an SC-path.
For example, we can decide to choose the first/last endpoint of the list or the left/right child of the last node of the SC-path.
Note that a chosen endpoint might be a terminal symbol.
Here, we consider terminal symbols are in $[1..\sigma]$ and variables (non-terminals) are in $[\sigma+1..n+\sigma]$, violating our assumption in the preliminary.\footnote{In this setting, we map an integer assigned to a variable to the integer without offset $\sigma$ when necessary.}
Then, we sort all SC-paths so that the sequence of the chosen endpoints becomes monotonically non-decreasing.
The sequence can be stored in a bit array $\monoe$ of length $\le n + n' + \sigma$ by concatenating the unary representation of the difference of each element of the sequence from the previous element with an additional bit put after every unary representation to separate them (consider the sentinel with value zero in front of the sequence).
If '1' is used for a separator, the $i$-th element of the sequence can be computed in constant time by $\select_{1}(\monoe, i) - i$.
The endpoints of other non-SC-edges are stored explicitly in an array using $n \lceil \lg (n + \sigma) \rceil$ bits.

When we want to access the endpoint of a non-SC-edge,
we first check if it is a chosen one using the information stored in $\pa$ and $\da$.
If so, we get the endpoint from $\monoe$.
If not, we get it from the array that stores it explicitly.

\section{Conclusions and Future Work}
In this paper, we presented novel space-efficient SLP encodings that support random access in almost optimal time.

We conclude with several future research directions:
\begin{itemize}
\item To reduce the most space consuming term $n \lceil \lg N \rceil$ bits of $\lensum$ in our encoding.
  For example, we may be able to think about storing each value of $g_1, g_2, \dots, g_m$ in $\lceil \lg g_m \rceil$ bits instead of $\lceil \lg N \rceil$ bits.
\item To improve extraction time for $\gtext[p..q]$ to $O(\log N + (q - p)/\log_{\sigma} N)$ like in~\cite{2015BelazzouguiCPT_AccesRankAndSelecIn_ESA}
  but in a smaller additional space.
\item To consider a space-efficient encoding that allows us to extract $\expand{x}[p..q]$ in $O(\log |\expand{x}| +q - p)$ time for a given variable $x$.
  Using the same random access procedure described in the proof of Theorem~\ref{theorem:enc},
  we can extract $\expand{x}[p..q]$ in $O(\log N + q - p)$ time.
  Unfortunately, the $\log N$ term cannot be reduced to $\log |\expand{x}|$ because
  there could still be $O(\log N)$ non-SC-edges on the path from $x$ to the target leaf.
\item To devise practical implementation based on the theoretical result of this paper.
\end{itemize}

\section*{Acknowledgements}
Tomohiro I was supported by JSPS KAKENHI (Grant Numbers 22K11907 and 24K02899) and JST AIP Acceleration Research JPMJCR24U4, Japan.
We thank Narumi Zenitsubo for the discussion we had for his graduation thesis.
We also thank anonymous reviewers for their feedback, which helps improve the manuscript greatly.

\bibliographystyle{plainurl}
\bibliography{refs}

\end{document}